\newtheorem{theorem}{Theorem}[section]
\newtheorem{lemma}[theorem]{Lemma}
\newtheorem{problem}[theorem]{Problem}
\newtheorem{definition}[theorem]{Definition}
\newcommand{\R}{\mathbb{R}}
\newcommand{\E}{\mathbb{E}}
\newcommand{\rank}{\mathrm{rank}}
\newcommand{\eps}{\varepsilon}
\newcommand{\bv}{\mathbf{v}}
\newcommand{\by}{\mathbf{y}}
\newcommand{\f}{\mathbf{f}}
\newcommand{\X}{\mathbf{X}}
\newcommand{\bepsilon}{\boldsymbol{\epsilon}}
\newcommand{\Lcal}{\mathcal{L}}
\newcommand{\Gcal}{\mathcal{G}}
\newcommand{\Jcal}{\mathcal{J}}
\newcommand{\Dcal}{\mathcal{D}}
\newcommand{\Fcal}{\mathcal{F}}
\newcommand{\Rcal}{\mathcal{R}}
\newcommand{\Scal}{\mathcal{S}}
\newcommand{\Ical}{\mathcal{I}}
\newcommand{\Tcal}{\mathcal{T}}
\newcommand{\norm}[1]{\left\|#1\right\|}
\newcommand{\inner}[1]{\langle#1\rangle}
\newcommand{\defeq}{\vcentcolon=}
\newcommand{\mse}{\mathrm{MSE}}
\newcommand{\err}{\mathrm{err}}
\def\colorful{0}
\newcommand{\new}[1]{{\color{red} #1}}
\newcommand{\new}[1]{{#1}}
\title{Efficient Algorithms for Multidimensional Segmented Regression\footnote{Authors are ordered alphabetically.}}
\author{
  Ilias Diakonikolas\thanks{Supported by NSF Award CCF-1652862 (CAREER) and a Sloan Research Fellowship. Part of this 
work was performed at the Simons Institute for the Theory of Computing during the program on Foundations of Data Science.} \\
  University of Wisconsin, Madison\\
  {\tt ilias@cs.wisc.edu} \\
  \and
  Jerry Li\\
   Microsoft Research AI\\
  {\tt jerrl@microsoft.com} \\
  \and
  Anastasia Voloshinov \\
  University of Southern California\\
 {\tt voloshana@gmail.com}
}
\begin{document}

\maketitle

\thispagestyle{empty}

\setcounter{page}{0}

\begin{abstract}
We study the fundamental problem of fixed design {\em multidimensional segmented regression}: 
Given noisy samples from a function $f$, promised to be piecewise linear on an unknown set of $k$ rectangles, 
we want to recover $f$ up to a desired accuracy in mean-squared error.
We provide the first sample and computationally efficient algorithm for this problem in any fixed dimension. 
Our algorithm relies on a simple iterative merging approach, which is novel in the multidimensional setting.
Our experimental evaluation on both synthetic and real datasets shows that our algorithm is competitive
and in some cases outperforms state-of-the-art heuristics.
\new{Code of our implementation is available at \url{https://github.com/avoloshinov/multidimensional-segmented-regression}. }
\end{abstract}

\section{Introduction} \label{sec:intro}


The {\em regression} problem (see, e.g., ~\cite{MT77})  
is one of the prototypical statistical tasks.
In a (fixed design) regression problem, we are given a set of $n$ observations 
$(\mathbf{x}^{(i)}, y_i)$, where the $y_i$ are the dependent
variables and the $\mathbf{x}^{(i)}$ are the independent variables, and our
goal is to  model the relationship between them. The standard assumption
is that there is a simple function family $\cal F$ that models the underlying relation, 
and that the dependent observations are perturbed by random noise. 
More formally, we assume that there exists a known function family 
$\cal F$ such that for some $f \in {\cal F}$ we have 
\begin{equation}
\label{model}
y_i = f(\mathbf{x}^{(i)})+\eps_i \; ,
\end{equation} where the $\eps_i$ are i.i.d. sub-Gaussian 
random variables (see Section~\ref{sec:preliminaries} for formal definitions). The quality of an approximation is typically measured using
the Mean Squared Error (MSE). 

The textbook case that $f$ is  linear is fully understood: It is well-known that the least-squares
estimator is statistically and computationally efficient. The more general setting 
that $f$ is {\em non-linear}, but satisfies some well-defined structural properties, has also been extensively
investigated~\cite{GF73, feder1975, friedman1991, BP98, YP13, KRS15, ASW13, meyer2008, chatterjee2015} 
and is still an active research topic.
Indeed, the non-linear setting is not well-understood from an information-theoretic
and/or computational standpoint.

In this paper, we study the case that the function $f$ is promised 
to be {\em piecewise linear} with a given number $k$ of {\em unknown} $d$-dimensional rectangles. 
This is known as fixed design  {\em multidimensional segmented} regression, 
and has received considerable attention in the statistics community
~\cite{GF73, breiman1984classification,feder1975, quinlan1992learning,BP98, loh2002regression,hothorn2006unbiased,loh2011classification, YP13, ADLS16}. 
Information-theoretic aspects of the segmented regression problem are well-understood:
Roughly speaking, the minimax risk is inversely proportional to the number of samples. 
In contrast, the computational complexity of the problem is poorly understood.
Known methods with provable guarantees, e.g., those presented in~\cite{blanchard2007optimal}, 
suffer worst-case runtimes of $\Omega(n^d)$, where $n$ is the number of data points. 
Moreover, their guarantees are often not sufficiently strong to actually recover the function $f$ 
in the traditional mean-squared-error metric (as we explain in Section~\ref{sec:our-results}).
In practice, heuristic methods such as CART~\cite{breiman1984classification} 
or GUIDE~\cite{loh2002regression} are often used, but to date there are 
no provable guarantees for the MSE of these estimators in this setting.
The CART algorithm in particular remains very popular in practice, 
and is the default implementation for regression trees in SciPy.

Many of these heuristics, including CART, allow the rectangles that determine $f$ 
to depend on all $d$ of the variables.
When $d$ is very large, the geometry of such trees becomes incredibly complex.
Indeed, it is straightforward to demonstrate that solving this problem efficiently would 
yield a polynomial time algorithm for PAC learning decision trees over $d$ variables with $k$ leaves. 
This is a notorious open problem in computational learning theory, 
believed to require at least $k^{\Omega(\log d)}$ time~\cite{EhrenfeuchtHaussler:89}.

To avoid this bottleneck, we consider a natural restriction of the general 
multidimensional segmented regression problem, where we assume that there is a known set 
$S$ of $d' \ll d$ coordinates so that the rectangles depend only on the coordinates in $S$.
That is, the position of these $d'$ coordinates at a data point $\mathbf{x}$ determine which linear fit applies to $\mathbf{x}$.
Such settings arise, e.g., in spatio-temporal datasets, where the linear predictor changes dramatically with time of year and/or location, but less so with other, secondary variables. When $d' = 1$, this problem reduces to the well-studied 
segmented regression problem~\cite{ADLS16}.
However,  for $d' > 1$, prior to this work, no computationally efficient algorithms 
with provable guarantees were known.

\subsection{Our Results}
\label{sec:our-results}
Our main contribution is the first computationally efficient algorithm, with provable performance guarantees, 
for multidimensional segmented regression in any fixed dimension $d'$.
Specifically, we give an algorithm \textsc{MultidimGreedyMerging}, satisfying the following:
\begin{theorem}[Informal, see Theorem \ref{mainthm}]
Let $f$ be a $k$-piecewise linear function over $\R^d$, 
where the rectangles that determine $f$ depend only on a known set of $d'$ variables, where $d' = O(1)$.
Given $\mathbf{x}^{(1)}, \ldots, \mathbf{x}^{(n)}$ and $y_1, \ldots, y_n$ generated by~\eqref{model}, 
where the noise $\eps_i$ is i.i.d sub-Gaussian, $\textsc{MultidimGreedyMerging}$ outputs $\hat{f}$ 
that with high probability satisfies
\[
\mse(\hat{f}) \defeq \frac{1}{n} \sum_{i = 1}^n (f(\mathbf{x}^{(i)}) - \hat{f}(\mathbf{x}^{(i)}))^2 = \widetilde{O} \left( \frac{k d}{n} + \sqrt{\frac{k}{n}} \right) \;.
\]
Moreover, the algorithm runs in time $\widetilde{O}(n d^2)$ time.
Here $\widetilde{O}(\cdot)$ hides polylogarithmic factors in its argument.
\end{theorem}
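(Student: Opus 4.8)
The plan is to analyze a greedy merging scheme carried out entirely in the $d'$-dimensional space spanned by the coordinates in $S$; because $d' = O(1)$, the number of distinct axis-aligned rectangular partitions is only $\poly(n)$, which is what makes a union-bound argument affordable. First I would fix an initial fine partition $\Pcal_0$ into axis-aligned cells, built by recursively splitting along the sorted data coordinates, chosen fine enough that each of the $k$ true rectangles coincides with a union of cells of $\Pcal_0$ except on the cells that straddle a true boundary; since there are $O(k)$ boundaries and $d' = O(1)$, only $\widetilde{O}(k)$ cells are cut. The greedy procedure then repeatedly merges adjacent cells --- maintaining a valid rectangular partition through the recursive tree structure --- always choosing the merge that increases the empirical least-squares residual the least, and halting once $O(k)$ pieces remain.

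For the error guarantee I would use the decomposition
\[
\mse(\hat f) \ls \min_{\abs{\Pcal} \ls ck} \frac{1}{n}\norm{f_\Pcal - f}^2 \;+\; \frac{1}{n}\,\Delta \;,
\]
where $f_\Pcal$ is the least-squares fit on partition $\Pcal$, function values are viewed as vectors over the $n$ data points, and $\Delta$ is the excess empirical residual the greedy incurs by merging locally rather than globally optimally. The first term is the estimation error of a near-optimal $O(k)$-piece partition: on any fixed such partition the fit lies in an $O(kd)$-dimensional subspace, so sub-Gaussian concentration of the projection of $\eps$ bounds it by $\widetilde{O}(kd/n)$, and a union bound over the $\poly(n)$ reachable partitions costs only $\polylog(n)$ factors. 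The overhead $\Delta$ is where the second term appears: I would bound the total empirical cost of the greedy merge sequence by a telescoping/charging argument against a reference partition that refines the true one, and then disentangle signal from noise in each merge cost via concentration, which I expect to yield $\Delta = \widetilde{O}(\sqrt{kn})$, i.e.\ the $\widetilde{O}(\sqrt{k/n})$ term. This $\sqrt{k/n}$ is precisely the price of greedy merging relative to an optimal but slower dynamic program, mirroring the one-dimensional tradeoff in \cite{ADLS16}.

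For the running time I would keep the candidate merge costs in a priority queue and compute each cell's least-squares residual incrementally from its $d \times d$ Gram matrix and $d$-dimensional moment vector, so that a single merge and its local queue updates cost $\widetilde{O}(d^2)$ and the entire schedule is $\widetilde{O}(n d^2)$. The step I expect to be the main obstacle is the multidimensional merging analysis itself. In one dimension adjacency is a total order and merging consecutive intervals trivially returns an interval, so the greedy stays within the class of valid partitions for free; for $d' > 1$ one must define an admissible set of merges that preserves a rectangular partition and then argue that the greedy, \emph{guided by noisy empirical residuals}, does not merge across the true rectangle boundaries in a way that inflates the approximation error. Making the telescoping bound for $\Delta$ go through in this setting --- showing the accumulated merge cost charges cleanly against the reference partition plus a controlled concentration term --- is the crux of lifting the guarantee of \cite{ADLS16} to the multidimensional regime.
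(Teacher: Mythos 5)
Your high-level skeleton --- a dyadic grid on the $d'$ relevant coordinates, a greedy merge restricted to the induced tree, an error split into a $\widetilde{O}(kd/n)$ subspace-concentration term plus a $\widetilde{O}(\sqrt{k/n})$ ``price of greediness'' term, and a union bound over the $\poly(n)$ candidate partitions --- matches the paper's in spirit. But the merge rule you propose is materially different from the one the theorem is about, and the difference is exactly where your acknowledged ``crux'' lives. The paper's algorithm does not perform one locally-cheapest merge at a time from a priority queue; in each of $O(\log n)$ rounds it merges \emph{every} group of sibling leaves \emph{except} the $2k'$ groups with the largest \emph{regularized} error $\widetilde{\err}(R,\hat{f}_R)=\|\by_R-\hat{\f}_R\|_2^2-\sigma^2|R|$, where $k'=k\log^{d'}n$. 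This batch rule is what powers the error analysis: since at most $k'$ of the withheld groups can contain a jump of $f$ (after converting the true partition to a hierarchical one, Lemma~\ref{lem:gen-to-hier}), every rectangle that \emph{is} merged has regularized error at most the \emph{average} regularized error of $k'$ rectangles on which $f$ is provably flat, and that average is controlled by the least-squares bound (Lemma~\ref{lem1}), the noise--function correlation bound (Lemma~\ref{cor4}), and the noise concentration bound (Lemma~\ref{lem2}). Summing over the disjoint merged rectangles yields the $O(\sigma\log(n/\delta)\sqrt{k'n})$ excess, i.e.\ the $\sqrt{k/n}$ term. Your proposal has no analogous averaging mechanism, and the ``telescoping/charging argument'' that is supposed to give $\Delta=\widetilde{O}(\sqrt{kn})$ is precisely the step you leave open.

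A second, more substantive misdirection: you frame the goal as showing the greedy ``does not merge across the true rectangle boundaries.'' The paper does the opposite and this is essential. With sub-Gaussian noise one cannot reliably distinguish a small jump from noise, so boundary-crossing merges \emph{will} occur; the analysis therefore partitions the output leaves into flat rectangles $\Fcal$ and jump rectangles $\Jcal$, and for each merged jump rectangle bounds the damage after the fact via the comparison to the withheld flat rectangles, plus a self-bounding inequality of the form $z^2\ls bz+c$ to pass from the regularized error back to $\err(R')=\|\f_{R'}-\hat{\f}_{R'}\|_2^2$. Attempting instead to certify that no bad merge happens would fail. Two smaller points: the subtraction of $\sigma^2|R|$ in the merge criterion is needed so that the $\sqrt{|R|}$-scale fluctuations of $\sum_{i\in R}\eps_i^2$ (Lemma~\ref{lem2}) do not bias the selection, and the polylog factors you hide in $\widetilde{O}(k)$ are concretely $k'=k\log^{d'}n$ pieces for the hierarchical reference partition and $k''=k\log^{d'+1}n$ pieces in the output, which is also where the $O(\log n)$-round, $O(nd^2\log n)$ runtime comes from rather than from incremental Gram-matrix updates.
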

We make several remarks about the guarantee achieved by our algorithm.
First, it is folklore that the rate of $\Theta(kd / n)$ is minimax optimal for this estimation task.
Thus, when $d$ or $k$ is large in comparison to $n$, we match the minimax rate, up to logarithmic factors.

Second, our guarantee is for mean-squared error recovery of $f$, which 
is a strong notion of recovery. In particular, we note that mean-squared error recovery is stronger 
than other natural notions considered in prior work, including those in~\cite{blanchard2007optimal}.
As a result, these prior results do not have any implications for our setting.

Third, our algorithm runs in time that is {\em nearly-linear} in the number of data points $n$ 
and the number of rectangles $k$, for any constant $d'$.
Finally, we achieve this runtime by plugging in basic solvers for standard least-squares.
However, as we discuss later on, one can instead instantiate our solver with any least-squares solver, and our runtime will match it, up to polylogarithmic factors.
Thus, when $d'$ is constant, our runtime matches that of standard least-squares regression, up to polylogarithmic factors.

We validate the performance of our algorithm with experiments on both synthetic and real-world data.
We demonstrate that in reasonable settings, the performance of our algorithm compares favorably to CART, 
even when CART is allowed to branch on any coordinate, not just the ones in $S$.

\subsection{Our Techniques} \label{ssec:results}
In this section, we provide a brief overview of our algorithmic approach.
We start by observing that the algorithmic difficulty of the problem comes
from the fact that the location of the $k$ rectangles (in each of which $f$ is linear)
is unknown. For $d'=1$, there is a known, classical dynamic program (DP) that allows
us to ``find'' the unknown intervals (see, e.g.,~\cite{ADLS16}). Unfortunately, such a 
DP approach makes crucial use of the geometry of the univariate setting and does not 
generalize even to $d' = 2$. Roughly speaking, the $d'=1$ DP crucially uses the fact that 
merging two adjacent intervals creates another interval. However, in the multidimensional setting, 
the geometry is more complex (for example, merging two adjacent rectangles does not necessarily 
result in another rectangle) and DP seems to inherently fail. In summary, we are not aware
of any prior algorithm for this problem with provable runtime better than the brute-force bound 
of $n^{\Omega(d')}$.

Our algorithm uses an iterative greedy merging approach, generalizing an analogous
approach that has been used in the {\em univariate} setting~\cite{ADHLS15, ADLS16, ADLS17}.
The idea is to start from a large set of rectangles (defined by the input points) and iteratively
merge subsets of rectangles according to a judiciously chosen criterion.  We note that our iterative
merging approach is novel for the multivariate setting and we believe it will find further applications.
In recent work,~\cite{DLS18} employed an iterative {\em splitting} algorithm to perform density estimation
of multivariate histogram distributions. Our approach shares some features with~\cite{DLS18}. For 
example, we use a similar dyadic hierarchical partition of the space built on a data-dependent 
grid, which serves as the starting point of our algorithm. However, we emphasize that there are 
significant differences between our algorithm and its analysis, compared to~\cite{DLS18}.
\new{Perhaps the most notable difference is that our algorithm works ``bottom up'' 
as opposed to ``top down'' in~\cite{DLS18}. This makes both the algorithm and its analysis more subtle. As a result,
the accuracy guarantees we obtain are somewhat stronger than what would be achievable via a ``top down'' approach.} 




\section{Preliminaries and Background} \label{sec:preliminaries}
\subsection{Formal Problem Statement}
In this subsection, we formally define the problem of multidimensional segmented regression 
that we will study in this paper.

A {\em hyper-rectangle} (or rectangle for short) $R \subseteq [0, 1]^{d'}$ is  
a set of the form $R = \otimes_{i = 1}^{d'} I_i$, where each $I_i \subseteq [0, 1]$ is an interval.
For $\mathbf{x} \in [0, 1]^{d'} \times \R^{d - d'}$, we say that $\mathbf{x} \in R$, 
for a rectangle $R \subseteq [0,1]^{d'}$, if the first $d'$ coordinates of $\mathbf{x}$ lie within $R$.

We will consider a slightly generalized notion of piecewise linear functions, 
namely \emph{kernel piecewise linear functions}, and the corresponding regression 
problem of kernel segmented regression.
We let $\kappa: \R^d \to \R^{m}$ be a known, fixed kernel function.
When $\kappa$ is the identity map, this reduces to the normal notion of segmented regression.
This slight generalization will be helpful in the later experiments.
However, we encourage the reader to assume that $\kappa$ is the identity on first reading.

We now have the following definition.
\begin{definition}[$k$-piecewise linear functions]
Let $d \geq d'$.
We say that $f: [0,1]^{d'} \times \R^{d - d'} \to \R$ is a \emph{$k$-piecewise linear function} with kernel $\kappa$ if there exists a 
partition of $[0,1]^{d'}$ into $k$ axis-aligned hyper-rectangles $\Rcal^f = \{R_1^f,\ldots,R_k^f\}$ 
and vectors $\theta_1, \ldots, \theta_k$, such that $f(\mathbf{x}) = \langle \boldsymbol{\theta}_i, \kappa (\mathbf{x}) \rangle$ 
if $\mathbf{x} \in R_i^f$.
For a $k$-piecewise linear function $f$, we call $\Rcal^f$ its \emph{associated partition}.
\end{definition}
We note that the restriction that assumes that the first $d'$ coordinates are within $[0, 1]$ 
is without loss of generality, by scaling.

In this paper, we consider the \emph{fixed design} segmented regression problem.
We are given a fixed multiset of samples $\mathbf{x}^{(1)},\ldots,\mathbf{x}^{(n)} \in [0, 1]^{d'} \times \R^{d-d'} $, and we have some unknown $k$-piecewise linear function $f: [0, 1]^{d'} \times \R^{d-d'}$ with a known kernel $\kappa$.
We will measure error under the standard metric of mean squared error.
For any function $\tilde{f}: \R^d \to \R$, we define the mean squared error to be: 
$\mse(\tilde{f}) = \tfrac{1}{n} \sum_{i}^n  (\tilde{f}(\mathbf{x}^{(i)})-f(\mathbf{x}^{(i)}))^2$.

With this notation, we can now formally define our problem:

\begin{problem} 
Let $\mathbf{x}^{(1)},\ldots,\mathbf{x}^{(n)} \in [0, 1]^{d'} \times \R^{d-d'} $, and $f$ be as above.
Let $y_1, \ldots, y_n$ be generated by~\eqref{model}, where the $\eps_i$ are independent 
sub-Gaussian noise variables (see e.g.,~\cite{Rig15}), with variance proxy $\sigma^2$, 
mean $\E[\eps_i]=0$, and variance $s^2 = \E[\eps_i^2]$. 
Given $(y_1, \mathbf{x}^{(1)}), \ldots, (y_n, \mathbf{x}^{(n)})$, the goal is to output $\widetilde{f}$ 
minimizing $\mse(\widetilde{f})$.
\end{problem}

Note that by losing at most a factor of $2$, we may assume that $n$ is a power of $2$.

The following vector notation will also be useful shorthand later on.
We let $\bepsilon$ denote the vector of noise variables, that is, $\bepsilon_i = \eps_i$.
Similarly, let $\f$ denote the vector with components $\f_i = f(\mathbf{x}^{(i)})$ for $i \in [n]$. 
For any hyper-rectangle $R$, and any vector $\mathbf{v} \in \R^n$, we let $\mathbf{v}_R$ 
be the restriction of $\mathbf{v}$ to the coordinates $i$ so that $\mathbf{x}^{(i)} \in R$.

Finally, if $\tilde{f}$ is piecewise linear on some set of rectangles $\Rcal$, we define 
the error of $\tilde{f}$ on $R \in \mathcal R$ as $\err(R, \tilde{f}) :=  \|\tilde{\f}_R-\f_{R}\|_2^2$.
The $\mse$ can thus also be expressed as $\mse(\tilde{f}) = \tfrac{1}{n} \sum_{R \in \Rcal} \err(R, \tilde{f})$.

\subsection{Hierarchical Structure}
The true structure of the $k$ pieces of $f$ can be complicated, so as an intermediate step 
we introduce the notion of a hierarchical partition structure. 

Given $\mathbf{x}^{(1)}, \ldots, \mathbf{x}^{(n)}$, where $n$ is a power of $2$, we define an associated grid $\Gcal = P_1 \times P_2 \times \ldots \times P_d$, where $P_i = \{x^{(1)}_i, \ldots, x^{(n)}_i\} \subset [0, 1]$ is the collection of all the different $i$th coordinates in the dataset.
Let $v^{(1)}_i \leq v^{(2)}_i \leq \ldots \leq v^{(n)}_i$ be the elements of $P_i$ in sorted order.
With this notation, {\em the level-$\ell$ rectangles induced by $G$}, denoted by $R_\ell$, are defined to be 
$R_\ell= \{\otimes_{i=1}^d [v^{(i)}_{2^\ell j_i} , v^{(i)}_{2^\ell j_i + 1}] : j_i \in {0,\ldots, n/2^\ell - 1}\}$. 

The  dyadic decomposition with respect to a grid $\Gcal$, denoted $\Dcal = \Dcal(\Gcal)$, 
is defined to be $\Dcal = \cup_{\ell = 1}^{\log n} R_\ell$. We let $\Dcal_k$ denote all partitions of $\Dcal$ 
into $k$ disjoint rectangles. That is, the dyadic decomposition includes all of the axis-aligned 
rectangles created by continuously splitting the grid in half in each of the first $d'$ dimensions 
of the samples which define the grid.
A dyadic decomposition induces a natural complete $2^{d'}$-ary tree, 
where we think of the rectangle corresponding to the entire grid as the root, 
the rectangles in $R_{\log n}$ are the leaves, and a rectangle $R$ in level $\ell$ for $\ell = 1, \ldots, \log n - 1$ 
has edges to the rectangles $R'$ in level $\ell + 1$ so that $R' \subset R$.

We say that a function $f : [0,1]^{d'} \times \R^{d-d'} \rightarrow \R$ obeys a dyadic hierarchical partition 
with respect to a grid $\Gcal$, if there exists a partition of $[0,1]^{d'}$ into axis-aligned rectangles 
$R_1,\ldots,R_k \in \Dcal(\Gcal)$ so that $f$ is piecewise-linear in the first $d'$ coordinates on $R_i$.
Such a function naturally corresponds to a subtree of the complete tree described above.
Namely, we take smallest subtree of the complete tree so that $f$ is constant on the leaves of the subtree.
We will often refer to this as the tree \emph{associated} to $f$.
 
We first need the following lemma, which states that any partition with respect to a grid can be converted to a hierarchical partition with not too many more pieces.
\begin{lemma}
\label{lem:gen-to-hier}
Fix a grid $\Gcal$ with side length $n$. Let $f : [0,1]^{d'} \times \R^{d-d'} \rightarrow \R$ be a k-piecewise linear function that is piecewise in $d'$ dimensions, so that $f$ is constant on $R_1,\ldots,R_k$, and every vertex of every rectangle lies on $\Gcal$. Then $f$ obeys a $k \log ^{d'} n$-hierarchical partition. 
\end{lemma}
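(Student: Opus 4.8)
The plan is to reduce the $d'$-dimensional statement to a one-dimensional fact about dyadic intervals and then take tensor products across the $d'$ relevant coordinates. The key structural observation is that every element of $\Dcal(\Gcal)$ is a product $\otimes_{j=1}^{d'} J_j$ of canonical dyadic intervals of the grid (one per relevant coordinate), so it suffices to decompose each true piece $R_i$ into such products. Since the $R_i$ already partition $[0,1]^{d'}$, refining each $R_i$ into dyadic rectangles yields a partition of $[0,1]^{d'}$ all of whose parts lie in $\Dcal(\Gcal)$; on each such part $f$ agrees with the linear function it equals on the containing $R_i$, so $f$ is piecewise linear on the refined partition. That refined partition is then exactly a dyadic hierarchical partition, and the whole problem reduces to counting how many dyadic rectangles are produced.

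First I would establish the one-dimensional building block: any interval $[v^{(j)}_a, v^{(j)}_b]$ whose endpoints lie on the grid $P_j$ can be written as a disjoint union of at most $O(\log n)$ canonical dyadic intervals of $P_j$. This is the standard ``segment-tree'' decomposition: walking down the binary tree on the leaves of $P_j$, at each of the $\log n$ levels at most two maximal dyadic subintervals are needed, one abutting each endpoint of $[v^{(j)}_a, v^{(j)}_b]$. I would prove this by induction on the depth, tracking the left and right ``staircases'' generated by the two endpoints.

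Next I would tensor this across coordinates. Writing each piece as a product $R_i = \otimes_{j=1}^{d'} I_j^{(i)}$ of grid-aligned intervals, I decompose each $I_j^{(i)}$ into its $O(\log n)$ canonical dyadic intervals and take the product of these one-dimensional decompositions. This expresses $R_i$ as a disjoint union of at most $(O(\log n))^{d'} = O(\log^{d'} n)$ rectangles, each a product of dyadic intervals and hence a member of $\Dcal(\Gcal)$. Summing over the $k$ pieces, the refined partition has at most $k \cdot O(\log^{d'} n)$ parts, giving the claimed $k\log^{d'} n$ bound; since each part lies inside some $R_i$ and $f$ is linear there, $f$ obeys this dyadic hierarchical partition.

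The main obstacle is genuinely the multidimensional geometry, which is exactly where the univariate intuition breaks and where the tensor-product structure of $\Dcal(\Gcal)$ must be exploited: in one dimension the $O(\log n)$ bound is classical, but one has to confirm that products of per-coordinate dyadic decompositions (with possibly \emph{different} levels in different coordinates) are legitimate elements of $\Dcal(\Gcal)$, and that the resulting collection is a bona fide hierarchical partition, i.e.\ the leaf set of a subtree of the complete tree. A secondary, purely bookkeeping point is the precise constant: the one-dimensional decomposition in fact uses up to $2\log n$ intervals, so the product bound is $(2\log n)^{d'}$, which matches $k\log^{d'} n$ up to the factor $2^{d'} = O(1)$ hidden for fixed $d'$.
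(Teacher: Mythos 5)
Your proposal is correct and follows essentially the same route as the paper's proof: decompose each coordinate interval of a piece into $O(\log n)$ canonical dyadic intervals and take the tensor product across the $d'$ relevant coordinates, yielding $O(\log^{d'} n)$ dyadic rectangles per piece and $k\log^{d'} n$ overall. Your remark about the factor $2^{d'}$ from the $2\log n$ one-dimensional bound is a fair observation (the paper states $\log n$ per interval without this constant), but it is absorbed into the $O(\cdot)$ for fixed $d'$ and does not change the argument.
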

\begin{proof}
Any function which is supported within an axis-aligned rectangle R in $d'$ dimensions 
can be represented with a $\log^{d'}n$ hierarchical partition. 
Let $R=[a_1,b_1] \times [a_2,b_2] \times \ldots \times [a_{d'},b_{d'}]$. 
Every interval $[a_i, b_i]$ can be written as a union of at most $\log n$ disjoint dyadic intervals $\Ical_i$. 
So, $R$ can be decomposed as the disjoint union of all rectangles $R = \otimes_{i=1}^{d'} I_i$, 
where $I_i$ ranges over all intervals in $\Ical_i$. This requires $\log^{d'} n$ pieces. 
Since our function has $k$ rectangles, then it can be represented with $k \log^{d'} n$ hierarchical pieces.
\end{proof}

\subsection{Mathematical Preliminaries}
In this section, we state some mathematical preliminaries that our analysis uses.




We require the following bound on the noise:

\begin{lemma}\label{lem2} Fix $\delta>0$ and let $\eps_1,\ldots,\eps_n$ be as defined in (\ref{model}). With probability $1-\delta$, we have
\[
\Big| \sum_{i \in R} \eps_i^2 - s^2|R| \Big| \leq O(\sigma^2 \log(n/\delta))\sqrt{|R|} \; ,
\]
simultaneously, for all rectangles $R$ in the dyadic partition.
\end{lemma}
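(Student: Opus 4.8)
The plan is to establish the bound for a single fixed rectangle via a Bernstein-type concentration inequality for sub-exponential sums, and then take a union bound over all rectangles in $\Dcal$. First I would fix a rectangle $R$ with $m := |R|$ data points and consider the random variables $X_i := \eps_i^2 - s^2$ for $i \in R$. These are independent, mean-zero, and, crucially, \emph{sub-exponential}: since each $\eps_i$ is sub-Gaussian with variance proxy $\sigma^2$, its square satisfies $\|\eps_i^2\|_{\psi_1} = \|\eps_i\|_{\psi_2}^2 = O(\sigma^2)$, and centering by the mean $s^2 = \var(\eps_i) \le \sigma^2$ changes the sub-exponential norm only by a constant factor. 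Thus $\sum_{i \in R} X_i = \sum_{i \in R}\eps_i^2 - s^2 m$ is a sum of $m$ independent mean-zero sub-exponential variables, each with parameter $K = O(\sigma^2)$.

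Next I would invoke Bernstein's inequality for sub-exponential sums, which gives, for every $t > 0$ and an absolute constant $c > 0$,
\[
\Pr\!\left[ \Big| \sum_{i \in R} X_i \Big| \ge t \right] \le 2\exp\!\left( -c\,\min\!\left( \frac{t^2}{m K^2}, \frac{t}{K}\right)\right) .
\]
Writing $N := |\Dcal|$ for the total number of rectangles and $L := \log(2N/\delta)$, I would choose the threshold $t_R := C\,\sigma^2 L\,\sqrt{m}$ for a large absolute constant $C$. The point of this choice is that for the right-hand side to drop below $\delta/N$ it suffices that $t$ be of order $\sigma^2(\sqrt{mL} + L)$, and since $\sqrt{mL} + L \le 2L\sqrt{m}$ for all $m \ge 1$ and $L \ge 1$, the threshold $t_R$ comfortably achieves per-rectangle failure probability at most $\delta/N$. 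The one place this requires attention is the small-$m$ regime (e.g.\ $m = 1$), where the deviation is governed by the linear term $t/K$ rather than the quadratic one; the factor $\sqrt{m}$ multiplying the bound is exactly what lets a single clean threshold succeed across both regimes.

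Finally I would union-bound over all rectangles in the dyadic partition. Counting level by level, $R_\ell$ contains $(n/2^\ell)^{d'}$ rectangles, so $N = \sum_{\ell=1}^{\log n}(n/2^\ell)^{d'} = O(n^{d'})$, whence $L = \log(2N/\delta) = O(d'\log n + \log(1/\delta)) = O(\log(n/\delta))$ for $d' = O(1)$; this is what collapses $\log(N/\delta)$ into the stated $\log(n/\delta)$, with the implicit constant absorbing the dependence on $d'$. Summing the $\delta/N$ failure probabilities over the $N$ rectangles gives total failure probability at most $\delta$, and on the complementary event every rectangle simultaneously satisfies $\big|\sum_{i \in R}\eps_i^2 - s^2|R|\big| \le O(\sigma^2 \log(n/\delta))\sqrt{|R|}$, as claimed.

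I expect the computation to be essentially routine, with the only genuinely delicate points being (i) verifying that a single threshold scaling like $\sqrt{|R|}$ simultaneously dominates both the Gaussian and the exponential tail regimes of Bernstein's inequality, and (ii) ensuring that the rectangle count $N = O(n^{d'})$ contributes only an $O(d')$ factor inside the logarithm, so that for constant $d'$ it is absorbed into $\log(n/\delta)$.
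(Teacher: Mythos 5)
Your proposal is correct and follows essentially the same route as the paper: a Bernstein-type inequality for the sub-exponential variables $\eps_i^2 - s^2$ on each fixed rectangle, followed by a union bound over the polynomially many rectangles in the dyadic partition (the paper bounds this count by $O(n^2)$ via the nonempty tree nodes, you by $O(n^{d'})$; either way the logarithm stays $O(\log(n/\delta))$ for constant $d'$). Your write-up is in fact more careful than the paper's, in particular in checking that the single threshold $\propto \sigma^2\log(n/\delta)\sqrt{|R|}$ covers both the quadratic and linear tail regimes.
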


\begin{proof}
Let $\delta'=O(\delta/n^2)$. Let $\Tcal$ be the hierarchical tree induced by a $n \times n$ size grid $\Gcal$. 
Then, for any rectangle $R \in \Tcal$, we apply a Bernstein-type inequality (see, e.g., Theorem 1.13 in~\cite{Rig15}) 
to the sub-exponential random variable $X_i = \eps_i^2 - s^2$ for $i \in R$. This inequality depends 
on the sub-exponential norm of the random variable, which in this case is $K=\sigma^2$. 
From this inequality, we have that the desired bound holds with probability $1-\delta'$. 
By a union bound over all $O(n^2)$ rectangles in $\Tcal$, we get that desired bound holds with probability $1-\delta$, as claimed.
\end{proof}

\noindent
We next require the following lemma, which states that with high probability 
the random Gaussian noise is not too correlated with the function.
The proof follows from standard maximal inequalities, and we include it in an Appendix for completeness.
\begin{lemma}\label{cor4} 
Let \new{$m > 0$}. Let $\Lcal_{\new{m}}$ be the space of \new{$m$}-piecewise linear functions. With probability $1-\delta$, we have 
\new{
\[
\sup_{f \in \Lcal_{m}}\frac{|\langle \bepsilon_R, \f_R \rangle|}{\|\f_R\|_2} \leq O(\sigma \sqrt{m \cdot \rank(\kappa(\X)) +m \log(n/\delta)}) \; .
\]}
\end{lemma}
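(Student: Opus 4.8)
The plan is to replace the supremum over the infinite family $\Lcal_m$ by a union bound over finitely many linear subspaces, and to control each subspace via a standard concentration bound for the projection of sub-Gaussian noise. The key structural observation is that, although $\Lcal_m$ is an infinite class, the restriction $\f_R$ only depends on the function values at the data points falling in $R$, and these values are highly constrained: once we fix a partition of $R$ into rectangles $R_1,\ldots,R_m$, the set of admissible vectors $\{\f_R : f \in \Lcal_m \text{ with partition refining } R_1,\ldots,R_m\}$ is a \emph{linear} subspace $V \subseteq \R^{|R|}$. Indeed, on each piece $R_i$ we have $f(\mathbf{x}) = \inner{\theta_i, \kappa(\mathbf{x})}$, so the values of $\f$ on $R_i$ lie in the column space of the kernel feature matrix restricted to those rows, whose dimension is at most $\rank(\kappa(\X))$; summing over the $m$ pieces gives $\dim V \ls m \cdot \rank(\kappa(\X))$.

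Next I would handle a single fixed subspace $V$ of dimension $p$. Since every $\f \in V$ satisfies $\inner{\bepsilon_R, \f} = \inner{\Pi_V \bepsilon_R, \f}$, where $\Pi_V$ is the orthogonal projection onto $V$, Cauchy--Schwarz gives $\sup_{\f \in V} |\inner{\bepsilon_R, \f}| / \norm{\f}_2 = \norm{\Pi_V \bepsilon_R}_2$, with the supremum attained at $\f \propto \Pi_V \bepsilon_R$. To bound this quantity, write $\norm{\Pi_V \bepsilon_R}_2^2 = \sum_{j=1}^{p} \inner{u_j, \bepsilon_R}^2$ for an orthonormal basis $u_1,\ldots,u_p$ of $V$. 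Each $\inner{u_j, \bepsilon_R}$ is sub-Gaussian with variance proxy $\sigma^2$, so the squared norm is a sum of $p$ squared sub-Gaussians, i.e.\ a sub-exponential random variable. A Bernstein-type tail bound (of exactly the type invoked in the proof of Lemma~\ref{lem2}) then yields $\norm{\Pi_V \bepsilon_R}_2 \ls O(\sigma \sqrt{p + \log(1/\delta')})$ with probability at least $1 - \delta'$.

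Finally I would union bound over all distinct subspaces $V$ that arise. Because only the way a partition separates the $n$ data points matters, and because each rectangle boundary can be taken to fall between consecutive coordinate values on the grid $\Gcal$, there are at most $\poly(n)$ combinatorially distinct rectangles and hence at most $N = n^{O(d' m)}$ relevant partitions; for constant $d'$ this is $n^{O(m)}$. Setting $\delta' = \delta / N$ makes $\log(1/\delta') = \log(N/\delta) = O(m \log n + \log(1/\delta))$, and combining this with the per-subspace bound and $p \ls m \cdot \rank(\kappa(\X))$ yields the claimed $O(\sigma \sqrt{m \cdot \rank(\kappa(\X)) + m \log(n/\delta)})$. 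I expect the main obstacle to be precisely this counting step: one must argue carefully that the seemingly infinite family of restrictions $\{\f_R\}$ is covered by only polynomially-in-$n$ many subspaces, and then verify that the resulting $\log N$ term contributes only the lower-order $m \log n$ summand inside the square root rather than swamping the leading $m \cdot \rank(\kappa(\X))$ term.
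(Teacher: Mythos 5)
Your proposal follows essentially the same route as the paper: fix a partition, observe that the restrictions $\f_R$ form a linear subspace of dimension at most $m\cdot\rank(\kappa(\X))$, apply a fixed-subspace maximal inequality, and union bound over the polynomially many partitions --- exactly the paper's argument, which simply cites the subspace step as Lemma~\ref{lem4} rather than rederiving it via your projection/Bernstein computation. The only cosmetic difference is that expansion of the maximal inequality; your counting of $n^{O(d'm)}$ partitions and the resulting $m\log n$ term match the paper's $\binom{n^2}{k'}=O(n^{2k'})$ union bound.
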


\noindent
With this in hand, we can prove the following guarantee for the error of the least squares fit, 
if we have identified rectangles on which the true function is linear.
The proof is very similar to the proof of Theorem 2.2 in~\cite{Rig15}, but we include it in an Appendix for completeness.
\begin{lemma} \label{lem1} 
\new{Let $\Rcal = \{R_1,\ldots,R_t\}$ be} such that $t = O(k)$, and let $f$ \new{be a piecewise linear function, so that it is a} linear function on each $R \in \Rcal$. 
\new{Let $\hat{f}$ be a $t$-piecewise linear function, so that on each $R \in \Rcal$, $\hat{f}$ is the linear least-squares fit to $f$ 
restricted to the points in $R$}.
Then, with probability $1-\delta$, we have
$\sum_{R \in \Rcal} \err(R, \hat{f}) \leq O(\sigma^2 k' (\rank(\kappa(\X))+\log(n/\delta)))$.
\end{lemma}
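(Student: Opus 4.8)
The plan is to follow the standard ``basic inequality'' (oracle inequality) argument for least-squares estimation, as in Theorem~2.2 of~\cite{Rig15}, specialized to the piecewise-linear setting and combined with the noise-correlation bound of Lemma~\ref{cor4}.

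First I would record the key structural fact: since $f$ is linear on each $R \in \Rcal$, the vector $\f_R$ lies in the column space of the feature matrix $\kappa(\X)_R$ on $R$; hence $\f$ belongs to the linear space $V_\Rcal$ of all functions that are linear (in the kernel features) on every piece of $\Rcal$. Because $\hat{f}$ is by definition the per-rectangle least-squares fit to the observations $\by = \f + \bepsilon$, it is exactly the orthogonal projection of $\by$ onto $V_\Rcal$, and in particular $\|\by - \hat{\f}\|_2^2 \le \|\by - \f\|_2^2$ since $\f \in V_\Rcal$. Substituting $\by = \f + \bepsilon$ and expanding both sides yields the basic inequality $\|\hat{\f} - \f\|_2^2 \le 2 \langle \bepsilon, \hat{\f} - \f \rangle$.

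Next I would exploit that $g := \hat{f} - f$ is itself $t$-piecewise linear on $\Rcal$ (a difference of two functions, each linear on every $R \in \Rcal$), so $g \in \Lcal_{O(k)}$. Applying Lemma~\ref{cor4} with $m = O(k)$ and taking the underlying rectangle to be the full set of points gives $\langle \bepsilon, g \rangle \le \|g\|_2 \cdot O(\sigma \sqrt{k \cdot \rank(\kappa(\X)) + k \log(n/\delta)})$. Writing $A = \|\hat{\f} - \f\|_2$ and $B = O(\sigma \sqrt{k \cdot \rank(\kappa(\X)) + k \log(n/\delta)})$, the basic inequality becomes $A^2 \le 2 A B$, which self-bounds to $A \le 2B$. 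Squaring, and recalling that $\sum_{R \in \Rcal} \err(R, \hat{f}) = \|\hat{\f} - \f\|_2^2 = A^2$ because the rectangles partition the points, gives the claimed bound $O(\sigma^2 k (\rank(\kappa(\X)) + \log(n/\delta)))$.

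The main obstacle is making the invocation of Lemma~\ref{cor4} rigorous rather than heuristic. That lemma controls a \emph{supremum} of the normalized correlation $|\langle \bepsilon_R, \f_R \rangle| / \|\f_R\|_2$ over the whole class $\Lcal_m$, so I must verify that $g = \hat{f} - f$ genuinely lands in $\Lcal_{O(k)}$ with the correct number of pieces (here exactly $t = O(k)$), and handle the degenerate case $\|g\|_2 = 0$ separately, where the bound holds trivially. I would also double-check that the degrees-of-freedom term carried by $\rank(\kappa(\X))$ is the right one for $\Lcal_{O(k)}$—each of the $O(k)$ pieces contributes at most $\rank(\kappa(\X))$ free parameters—so that the $\sqrt{k \cdot \rank(\kappa(\X))}$ factor from Lemma~\ref{cor4} matches the target. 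Everything else reduces to the routine algebra of the self-bounding step.
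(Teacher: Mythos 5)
Your proposal is correct and follows essentially the same route as the paper's own proof: the basic inequality $\|\by-\hat{\f}\|_2^2 \le \|\by-\f\|_2^2$ from optimality of the least-squares fit, expansion to get $\|\hat{\f}-\f\|_2^2 \le 2\langle\bepsilon,\hat{\f}-\f\rangle$, an application of Lemma~\ref{cor4} to the piecewise-linear difference $\hat{f}-f$, and the self-bounding step. The extra care you flag (the degenerate case $\|g\|_2=0$ and checking that $\hat{f}-f$ has the right number of pieces) is sensible but does not change the argument.
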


\section{Greedy Merging Algorithm} \label{sec:alg}

In this section, we present our algorithm for multidimensional segmented regression. 
Our algorithm begins by constructing a grid over the first $d'$ coordinates of the samples, 
so then each sample is located on a vertex of this grid. This grid induces a dyadic hierarchical partition. 
We view this partition as a hierarchical tree, where the root contains the entire grid, 
and the children split the parent into equal sized axis-aligned rectangles in the partition, 
as long as the children contain samples.

Our algorithm begins with a tree on the dyadic partition with $n$ leaf nodes 
and iteratively considers merging groups of sibling leaf nodes, 
which correspond to axis-aligned rectangles in the same level in the hierarchy. 
In each iteration, we fit a least square fit over each of the groups of sibling leaf nodes. 
We then merge all siblings except the $2k'$ groups that give us the largest regularized error measure. 
The regularized error measure is defined as 
\begin{equation}\label{eqn:reg-error}
\widetilde{\err}(R,\hat{f}_R)= \|y_R - \hat{f}_R\|_2^2 -\sigma^2|R| \;, 
\end{equation}
where $k'= k \log^{d'} n$. 
We repeat this process until we have less than $2k'$ groups of siblings up for consideration to be merged. 
The pseudo-code for our algorithm is given in Algorithm~\ref{alg:greedymerge} \new{and an illustration
is given in Figure~\ref{fig:partition}}.

\begin{figure}[!htb]
\centering
\includegraphics[width=\linewidth]{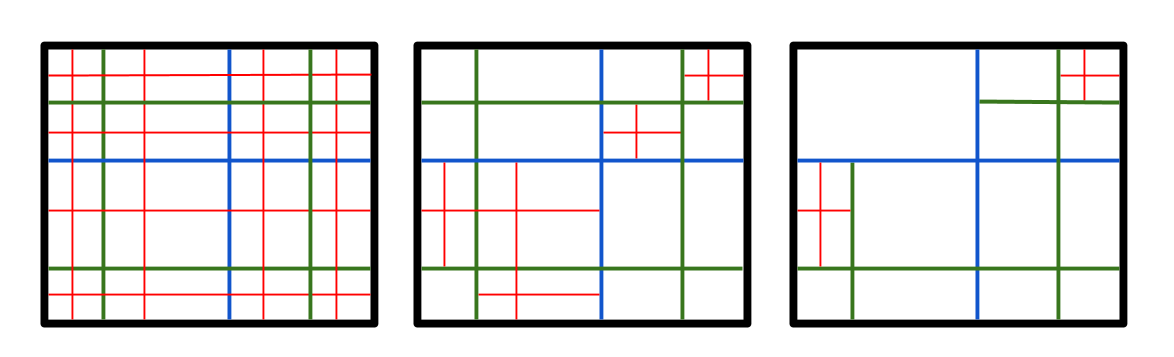}
\caption{Example of two iterations of the merging algorithm on the partitions. 
The left sub-figure displays the hierarchical partitioning of $\R^2$ in the beginning. 
The level-$1$ rectangles are bordered by the blue lines, 
the level-$2$ by the green, and the level-$3$ by the red. 
In the first iteration, the candidates for merging are the groups of $4$ rectangles bordered by red. 
The center figure shows the results after the merging has been completed --- the algorithm chooses 
to merge those that reduce \new{the regularized error of \eqref{eqn:reg-error}.}
The right-most figure shows another iteration of the algorithm.}
\label{fig:partition}
\end{figure}

The regularized error measure is used as a proxy for the true error, which we cannot measure. We do not merge together rectangles that give the largest regularized error measure, since this is an indication that these samples might not fit well in a piece. By not merging $2k'$ of the largest errors in each iteration, we have a guarantee that $k'$ of these were actually rectangles on which $f$ was flat (contained in a true piece of $f$), which will allow us to bound the error on the rectangles we merge. 

\begin{algorithm}[t]
\SetAlgoLined
\textsc{MultidimGreedyMerging}$(\X,\by)$ \\
Let $\Gcal$ be a grid over the first $d'$ coordinates of the samples in $\X$ \\
Let $\Tcal$ be a subtree of the hierarchical tree induced by $G$, initially containing all nodes which contain samples. \\
Let $\Scal$ be the collection of sets of sibling leaf nodes in $\Tcal$. \\
Let $k' = k \log^{d'} n$ \\
\While {$|\Scal| \geq 2k'$} {
\For {\text{each set of sibling leaves} $R \in \Scal$} {
Let $\hat{f}_R$ = \textsc{LeastSquares}($\kappa({\mathbf{X}_R}),\by_R$) \\
Let $\widetilde{\err}(R,\hat{f}_R)= \|y_R - \hat{f}_R\|_2^2 -\sigma^2|R|$
}
Let $\Jcal$ be the set of $2k'$ sibling sets $R \in \Tcal$ with largest $\widetilde{\err}(R,\hat{f}_R)$ \\
\For {\text{each} $R \notin \Jcal$} {
Merge the sibling leaf nodes together in $\Tcal$, so their parent becomes a leaf node
}
}
\Return The function which is the least squares fit for every leaf of $\Tcal$
\caption{Piecewise linear regression by greedy merging}
\label{alg:greedymerge}
\end{algorithm}

\subsection{Analysis of Algorithm~\ref{alg:greedymerge}}

In our algorithm, we use the blackbox subroutine $\textsc{LeastSquares}(\X,\by)$, where $\X$ is the $n \times d$ data matrix and $\by$ is the vector of labels. The classical algorithms for least squares that are commonly used in practice have time complexity $O(nd^2)$. We will assume this running time for this subroutine. So, when computing the least squares fit on some subrectangle $R$, $\textsc{LeastSquares}(\X_R,\by_R)$ runs in time $O(|R| \cdot d^2)$. 
With this, it is not hard to show the following runtime bound.
\begin{lemma}
\label{lem:runtime}
Algorithm~\ref{alg:greedymerge} runs in time $O(nd^2 \log n)$.
\end{lemma}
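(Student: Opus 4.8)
The plan is to bound the running time as the number of iterations of the main loop times the cost of a single iteration, and then to verify that the preprocessing and the final least-squares pass are subdominant. Concretely, I will show that each iteration costs $O(nd^2)$ and that the loop runs for $O(\log n)$ iterations, which gives the claimed bound of $O(nd^2\log n)$.

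First I would analyze a single iteration. The key structural fact is that the sets of sibling leaf nodes in $\Scal$ are pairwise disjoint when viewed as sets of samples: every leaf of $\Tcal$ lies in exactly one sibling set, and the leaves are disjoint rectangles that together cover all the samples, so $\sum_{R\in\Scal}|R|\le n$. Since \textsc{LeastSquares}$(\kappa(\X_R),\by_R)$ runs in time $O(|R|d^2)$, the total cost of fitting every group in one iteration is $\sum_{R\in\Scal}O(|R|d^2)=O(nd^2)$. Forming each regularized error $\widetilde{\err}(R,\hat f_R)$ from its fit takes $O(|R|)$ additional time and is absorbed into this bound; selecting the $2k'$ groups of largest regularized error can be done by a linear-time selection in $O(|\Scal|)=O(n)$ time (avoiding an extra logarithmic factor that a full sort would incur); and the merging step touches each group once, costing $O(n)$ in total. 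Since $d\ge 1$, all of these are dominated by $O(nd^2)$, so one iteration costs $O(nd^2)$.

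Next I would bound the number of iterations by $O(\log n)$, which is where the dyadic structure is essential: the hierarchical tree induced by $\Gcal$ has depth $\log n$, since each active coordinate is halved $\log n$ times. Each iteration merges every sibling group not protected among the $2k'$ largest-error groups, replacing a group of sibling leaves by their parent and thereby lifting the merging frontier up by one level of the tree; tracking this frontier level by level shows that after $t$ iterations the groups eligible for merging sit at depth at most $\log n - t$, so the loop terminates within $\log n$ rounds. Combining the two estimates bounds the main loop by $O(nd^2\log n)$, and the preprocessing (sorting coordinates to build $\Gcal$ and placing each sample, $O(dn\log n)$) together with the final pass that fits least squares on every leaf (again $O(nd^2)$ by disjointness) are both subdominant. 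The step I expect to require the most care is the iteration count rather than the per-iteration cost: one must argue that retaining the $2k'$ highest-error groups each round does not stall the frontier at a fine level and force more than $\log n$ rounds, which is handled by setting the protected groups aside so that the active frontier strictly rises at every iteration; the disjointness argument then applies verbatim to the active groups at each level.
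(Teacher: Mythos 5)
Your proof takes essentially the same route as the paper's: at most $\log n$ iterations (the depth of the dyadic tree) times $O(nd^2)$ per iteration, the latter following from the disjointness of the sibling leaf groups and the $O(|R|\,d^2)$ cost of each least-squares call. The one caveat is that your claim that the merging frontier \emph{strictly} rises each round is not literally true---the $2k'$ protected groups remain at their depth and are reconsidered in later rounds---but the paper itself simply asserts the $\log n$ iteration bound without further justification, so your argument is at least as complete as the original.
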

\begin{proof}
We go through a maximum of $\log n$ iterations. For each iteration, we need to call $\textsc{LeastSquares}$ 
for each of the groups of sibling leaves in $\Tcal$. For each group of leaves $S$, the runtime is $O(|S| \cdot d^2)$, 
and since the leaves are disjoint, the total runtime over all the groups is $O(n d^2)$. 
Thus, we get a runtime of $O(nd^2 \log n)$ over all iterations. 
\end{proof}

It is easily verified that by plugging in other solvers instead, we can also match their runtime, up to poly-logarithmic factors.

\noindent
The following simple lemma bounds from above the number of pieces that the algorithm produces.
\begin{lemma}
\label{lem:num-pieces}
Algorithm~\ref{alg:greedymerge} outputs a function that is piecewise \new{linear} on  $O(k'')$ pieces, where $k''=k \log^{d'+1} n$.
\end{lemma}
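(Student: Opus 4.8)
The plan is to bound the number of leaves in the final tree $\Tcal$ by tracking how the merging process terminates. Recall that the algorithm halts once $|\Scal| < 2k'$, where $\Scal$ is the collection of sibling-leaf groups currently eligible for merging. At termination, each such sibling group corresponds to at most $2^{d'}$ leaves (since the tree is $2^{d'}$-ary), so the leaves that are siblings of other leaves contribute at most $2k' \cdot 2^{d'}$ nodes. The subtlety is that not every leaf need be part of a sibling group at termination, so I would first argue that the number of ``non-mergeable'' leaves is also controlled.

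First I would set up the accounting carefully. The algorithm operates in at most $\log n$ iterations (Lemma~\ref{lem:runtime}), and in each iteration it retains the $2k'$ sibling-groups with largest regularized error while merging all the rest. The key structural observation is that a leaf becomes ``frozen'' (never merged thereafter) only if at some iteration its sibling-group was among the top-$2k'$ by regularized error. Since in each of the $\le \log n$ iterations at most $2k'$ sibling-groups are protected, and each protected group contains at most $2^{d'}$ leaves, the total number of leaves that ever get frozen across all iterations is at most $2k' \cdot 2^{d'} \cdot \log n = O(2^{d'} k' \log n)$. Because $d' = O(1)$, the factor $2^{d'}$ is a constant, and using $k' = k\log^{d'} n$ this becomes $O(k \log^{d'+1} n) = O(k'')$.

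Next I would reconcile this count with the final state of $\Tcal$. When the while-loop exits, $|\Scal| < 2k'$ sibling-groups remain, each contributing at most $2^{d'}$ leaves, giving at most $2^{d'} \cdot 2k' = O(k')$ additional leaves that are still ``live'' but never got merged. Adding the frozen leaves accumulated over all iterations to these remaining live leaves yields a total of $O(k'') + O(k') = O(k'')$ leaves, since $k' \le k''$. Finally, by construction the returned function is the least-squares fit on each leaf, hence linear on each of these $O(k'')$ pieces, establishing the claim.

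The main obstacle I anticipate is making the ``frozen leaf'' bookkeeping rigorous: one must be careful that a leaf protected in one iteration may later merge with its siblings in a subsequent iteration, so ``protected once'' does not literally mean ``frozen forever.'' The clean way around this is to bound not the frozen leaves but the \emph{total number of merge-protections issued}: each iteration issues at most $2k'$ protections, there are at most $\log n$ iterations, and the final leaf count is dominated by the protections in the last effective iteration together with the $O(k')$ live groups at termination. I would phrase the argument to charge each surviving leaf to a protection event, taking care that the $2^{d'}$-ary branching only inflates counts by the constant $2^{d'}$, so that the final bound $O(k \log^{d'+1} n)$ holds.
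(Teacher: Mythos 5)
Your accounting covers only the leaves whose sibling-group at some point actually entered $\Scal$ as a complete set of sibling leaves --- that is, leaves that were either protected (placed in $\Jcal$) at some iteration, or that belong to one of the fewer than $2k'$ groups alive at termination. But the final tree also contains leaves that are \emph{never} members of any complete sibling-leaf group: a node $v$ becomes a leaf when its children are merged, yet some sibling $u$ of $v$ never becomes a leaf, because a protected (hence never-merged) group sits somewhere in $u$'s subtree; then the sibling set containing $v$ and $u$ never enters $\Scal$, so $v$ is never evaluated, never protected, and never merged, and it survives to the end. You explicitly flag these leaves (``not every leaf need be part of a sibling group at termination'') and promise to control them, but your actual argument --- charging each surviving leaf to a protection event --- cannot reach them, since they are associated with no protection event. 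This is the gap.

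These blocked leaves are precisely what the paper's proof is about: it charges them to the root-paths of the groups remaining at termination, observing that each such group can prevent at most $2^{d'}-1$ leaves per level from merging, hence at most $2^{d'}\log n$ leaves per group, and this per-group factor of $\log n$ along the root path is where the extra logarithm in $k'' = k'\log n$ comes from. Your count of at most $2k'\log n$ protection events times $2^{d'}$ leaves each happens to produce the same $O(k'')$ total, but it bounds a different (and incomplete) collection of leaves. To close the gap you would need a second charging step for the blocked leaves, e.g., charging each blocked leaf to an internal sibling and each such internal node to a surviving group in its subtree; note that if this charge is spread over \emph{all} groups ever protected (rather than only those under consideration when the loop exits), the naive bound picks up an additional $\log n$ factor, so this step requires some care to recover the stated $O(k'')$.
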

\begin{proof}
We stop merging if there are ever less than $2k'$ sibling leaf groups under consideration to be merged. 
Each of these groups is responsible for preventing at most $2^{d'}-1$ leaf nodes from being merged 
in each level on the path from them to the root node (since not all of the siblings of these leaves are also leaves). 
So, each group might block $2^{d'}\log n$ leaf nodes from merging. 
Therefore, a total of $(2^{d'}\log n)k' = 2^{d'}k''$ leaf nodes are blocked, where $k'' = k\log^{d'+1} n$. 
If we add $2^{d'+1} k'$nodes that were up for consideration but not merged, 
we then have $O(k'')$ total leaf nodes at the end.
\end{proof}

We are now ready to prove our main theorem.

\begin{theorem} \label{mainthm}
Let $\delta >0$ and let $\hat{f}$ be the estimator returned by $\textsc{MultidimGreedyMerging}$. 
Let $k' = O(k \log^{d'} n)$. Let  $k''=O(k\log^{d'+1} n)$ be the number of pieces in $\hat{f}$. 
Let $r = \rank(\kappa(\X))$.
Then, with probability $1-\delta$, we have
\[ \mse(\hat{f}) = O\bigg(\frac{\sigma^2k''(r+\log(n/\delta))}{n}+\frac{\sigma \sqrt{k'} \log(n/\delta) }{\sqrt{n}}\bigg) \;.\]
\end{theorem}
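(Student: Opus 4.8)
The plan is to bound the total squared error $E := \sum_{R \in \Rcal}\err(R,\hat f) = \|\f - \hat f\|_2^2$ over the final partition $\Rcal$ returned by the algorithm, and then divide by $n$. Throughout I would condition on the events of Lemma~\ref{lem2} and Lemma~\ref{cor4} (and hence on the conclusion of Lemma~\ref{lem1}), which hold simultaneously with probability $1-\delta$ after adjusting constants. Two structural facts drive the argument: by Lemma~\ref{lem:gen-to-hier} the true function $f$ obeys a hierarchical partition into $k' = O(k\log^{d'} n)$ dyadic rectangles, and by Lemma~\ref{lem:num-pieces} the output $\hat f$ is piecewise linear on $|\Rcal| = O(k'')$ dyadic rectangles.

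First I would split $\Rcal = \Rcal_F \cup \Rcal_C$, where $\Rcal_F$ are the pieces on which $f$ is genuinely linear and $\Rcal_C$ are the ``crossing'' pieces straddling a boundary of $f$'s partition. Since both partitions are subtrees of the same complete dyadic tree, any two of their nodes are either nested or disjoint; a crossing leaf must strictly contain (be an ancestor of) some piece of $f$, and as the leaves of $\Rcal$ are pairwise disjoint, distinct crossing leaves are ancestors of distinct $f$-pieces. Hence $|\Rcal_C| \le k'$. On the flat part, each piece is a rectangle on which $f$ is linear, so Lemma~\ref{lem1} applied to $\Rcal_F$ gives $\sum_{R \in \Rcal_F}\err(R,\hat f) = O(\sigma^2 |\Rcal_F|(r + \log(n/\delta))) = O(\sigma^2 k''(r + \log(n/\delta)))$, which is exactly the first term of the claimed bound.

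The heart of the proof is bounding the crossing contribution $E_C := \sum_{R \in \Rcal_C}\err(R,\hat f)$. Expanding the regularized criterion yields the identity $\widetilde{\err}(R,\hat f_R) = \err(R,\hat f) + 2\langle \f_R - \hat f_R, \bepsilon_R\rangle + (\|\bepsilon_R\|_2^2 - \sigma^2|R|)$, so $\err(R,\hat f) \le \widetilde{\err}(R,\hat f_R) + 2|\langle \f_R - \hat f_R,\bepsilon_R\rangle| + |\,\|\bepsilon_R\|_2^2 - \sigma^2|R|\,|$. Every crossing piece was created by a merge, so at the iteration it was formed its regularized error was at most the protection threshold $\tau$ (the $2k'$-th largest regularized error at that step). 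The decisive observation is that, by the nesting count above, at most $k'$ of the sibling-sets considered at any iteration are non-flat (their disjoint parents would otherwise contain distinct $f$-pieces), so among the $2k'$ protected sets at least $k'$ are flat; for a flat set the computation $\hat f_R = \f_R + P_R\bepsilon_R$ together with Lemma~\ref{lem2} shows $\widetilde{\err}(R,\hat f_R) = -\|P_R\bepsilon_R\|_2^2 + (\|\bepsilon_R\|_2^2 - \sigma^2|R|) \le O(\sigma^2\log(n/\delta))\sqrt{|R|}$, so $\tau$ is controlled by the small regularized error of a protected flat witness. Charging each crossing piece to a distinct protected flat witness and applying Cauchy--Schwarz over these pairwise-disjoint witnesses (whose sizes sum to at most $n$) converts $\sum_{R\in\Rcal_C}\widetilde{\err}(R,\hat f_R)$ into a term of order $\sigma^2\log(n/\delta)\sqrt{k' n}$, which is the second term of the bound. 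The residual fluctuations $\sum_R |\,\|\bepsilon_R\|_2^2 - \sigma^2|R|\,|$ are handled by Lemma~\ref{lem2} (the global contribution is $O(\sigma^2\log(n/\delta)\sqrt n)$), and the cross terms are controlled via Lemma~\ref{cor4}: since $\f - \hat f$ is $O(k'')$-piecewise linear, $|\langle\bepsilon, \f - \hat f\rangle| \le \sqrt{E}\cdot O(\sigma\sqrt{k''(r + \log(n/\delta))})$.

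Collecting these bounds produces an inequality of the form $E \le O(\sigma^2 k''(r+\log(n/\delta))) + O(\sigma^2\log(n/\delta)\sqrt{k'n}) + 2\sqrt{E}\cdot O(\sigma\sqrt{k''(r+\log(n/\delta))})$, which is quadratic in $\sqrt{E}$; solving it (using that $x^2 \le ax + b$ implies $x^2 = O(a^2 + b)$) absorbs the cross term into the first summand and gives $E = O(\sigma^2 k''(r+\log(n/\delta)) + \sigma^2\log(n/\delta)\sqrt{k'n})$, so dividing by $n$ yields the theorem. I expect the main obstacle to be the charging argument of the third paragraph: one must show that the flat witnesses bounding the per-merge thresholds can be selected across all iterations with effectively disjoint supports, so that Cauchy--Schwarz produces the $\sqrt{k'}$ factor rather than the lossy $k'$ that a naive per-piece bound $|\Rcal_C|\cdot\tau$ would give. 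Establishing this, together with verifying that crossing leaves are always the product of a merge (so that the threshold bound applies to them at all), is where the delicate part of the analysis lies.
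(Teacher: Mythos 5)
Your skeleton matches the paper's proof: the same flat/jump decomposition of the output leaves, Lemma~\ref{lem1} for the flat part, the observation that at least $k'$ of the $2k'$ protected sibling sets at any iteration must be flat, and a quadratic inequality at the end. The two places where you diverge are exactly the two places where your argument is incomplete. First, your claim that every crossing piece was created by a merge is false as stated: a leaf containing a single sample may never have been merged and yet straddle a boundary of $f$, so no threshold comparison is available for it. The paper isolates these as a separate class $\Jcal_1$ and bounds their error directly by $\|\bepsilon_R\|_2^2 \le \sigma^2|R| + O(\sigma^2\log(n/\delta))\sqrt{|R|}$ via Lemma~\ref{lem2}, contributing $O(\sigma^2(k''+\log(n/\delta)\sqrt{k''}))$; you need this case as well.

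Second, and more substantively, your charging scheme requires selecting, across \emph{all} iterations, pairwise disjoint flat witnesses, one per merged crossing piece, and you correctly flag that you have not established this. It is in fact true, by the laminar structure: the candidate witnesses at any fixed iteration are pairwise disjoint; a sibling set arising at a later iteration can never be strictly contained in one from an earlier iteration (leaves are never re-split, only merged upward), so each already-chosen witness intersects at most one of the $\geq k'$ candidates at any subsequent iteration; and since there are at most $k'$ crossing pieces in total, a greedy selection always has a free candidate. So your route can be completed. The paper sidesteps the issue entirely with a different accounting: it bounds $\widetilde{\err}(R')$ for each merged crossing piece by the \emph{average} of $\widetilde{\err}$ over the $k'$ flat witnesses of that one iteration (which needs only within-iteration disjointness and yields a per-piece term that sums to $O(\sigma\log(n/\delta)\sqrt{n})$), and then extracts the $\sqrt{k'n}$ term not from the witnesses but from the $O(\sigma\log(n/\delta))\sqrt{|R'|}$ fluctuation terms of the crossing pieces themselves, which are disjoint by construction. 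Two genuine improvements in your write-up are worth keeping: the identity $\widetilde{\err}(W)\le \|\bepsilon_W\|_2^2-\sigma^2|W|$ for a flat witness (the least-squares residual only shrinks the noise), which replaces the paper's three-term expansion of $\sum_W\widetilde{\err}(W)$, and the global treatment of the cross term via Lemma~\ref{cor4} applied to the $O(k'')$-piecewise function $\f-\hat{\f}$. But as written, the two gaps above mean the proof is not yet complete.
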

\begin{proof}
Let $\Rcal = \{R_1,\ldots,R_{k''}\}$ be the leaves output by the algorithm. 
We partition $\Rcal$ into two sets, and bound the error on the sets separately. 
We say $f$ is flat on a rectangle $R$ if $f$ over $R$ is defined by one linear function. 
We say that $f$ has a jump on $R$ if it is defined by more than one linear function over $R$. 
Let
$\mathcal F = \{R \in \Rcal: \text{f is flat on R} \}$ and 
$\mathcal J = \{R \in  \Rcal: \text{f has a jump on R}\}$.

We can bound the error over the rectangles in $\Fcal$ by directly applying Lemma \ref{lem1} to get $\sum_{R \in \Fcal} \err(R) \leq O(\sigma^2 k''(r+\log(n/\delta)))$.
Next we bound the error of the rectangles in $ \Jcal$. 
Consider some $R \in \Jcal$. If $|R|=1$, call this set $ \Jcal_1$. 
Then we know that $\hat{f}(x_i)=y_i$ for the $i\in R$. 
We get the following bound from Lemma~\ref{lem2}.
\begin{align*}
\sum_{R \in \Jcal_1} \|\f_R-\hat{\f}_R\|^2_2 &\leq \sum_{R \in  \Jcal_1} \|\bepsilon_R\|^2_2 \\
&\leq O \left(\sum_{R \in  \Jcal_1} |R|+\log (n/\delta)\sqrt{|R|} \right) \\
&\leq O  \left(\sigma^2 \left( k''+\log (n/\delta) \sqrt{k''} \right) \right) \; .
\end{align*}
Otherwise, $R \in \Jcal$ but $|R|>1$. 
Call this set $\Jcal_2$. So, for each $R \in \Jcal_2$, there was some iteration 
where there was a rectangle $R'$ such that $R' \subseteq R$, and $R'$ was merged in that iteration. 
Let the set of rectangles that were sub-rectangles of rectangles in $\Jcal_2$ 
and were merged at some iteration be $\Rcal'$. 

In an iteration where $R'$ was merged, there were $2k'$ rectangles, 
$ R_1,\dots,R_{2k'}$ such that $\widetilde{\err}(R') \leq \widetilde{\err}(R_j)$ for $j = 1,\ldots,2k'$. 
Out of these, we know that on at least $k'$ of them, $f$ must be flat. 
Let this set be $\Rcal^*$. We know that the error of each $R \in \Rcal^*$ 
can be bounded above by the average error of all $R \in \Rcal^*$. 
So, we have that 
$\widetilde{\err}(R') \leq \widetilde{\err}(R_j) \leq \tfrac{1}{k'}\sum_{R \in \Rcal^*} \widetilde{\err}(R)$.

We can bound $\sum_{R \in \Rcal^*} \widetilde{\err}(R)$ as follows
\begin{align}
\sum_{R \in \Rcal^*} \widetilde{\err}(R) &=
\sum_{R \in \Rcal^*} \|\by_R - \hat{\f}_R\|_2^2 -\sigma^2|R| \\
&= \sum_{R \in \Rcal^*} \|\f_R-\hat{\f}_R\|^2_2  \notag \\ &~~~~+ 2\sum_{R \in \Rcal^*} \langle \bepsilon_R, \f_R - \hat{\f}_R \rangle \notag \\
&~~~~+ \sum_{R \in \Rcal^*} \sum_{i \in R} (\eps_i^2 - \sigma^2) \label{eq:3terms}\\
&\leq O(\sigma^2 k' (r+\log(n/\delta))) \notag\\
 &~~~~+ O(\sigma \log (n/\delta)\sqrt{n}) \;. \label{eq:averagebound}
\end{align}
The first term in (\ref{eq:averagebound}) follows from bounding 
the first term of (\ref{eq:3terms}) with Lemma~\ref{lem1}, 
and the second term of (\ref{eq:3terms}) with Lemma~\ref{cor4}.
The second term of (\ref{eq:averagebound}) follows from Lemma~\ref{lem2}.

Thus, we divide by $k'$ to get that $\widetilde{\err}(R') \leq O(\sigma^2  (r+\log(n/\delta))) + O(\frac{1}{k'}\sigma \log (n/\delta)\sqrt{n})$.
Since we actually want to bound $\err(R')$, we bound from below $\widetilde{\err}(R')$:
\begin{align*}
\widetilde{\err}(R') &= \|\by_{R'}-\hat{\f}_{R'}\|^2_2-\sigma^2|R'| \\ &= 
\|\f_{R'}-\hat{\f}_{R'}\|^2_2  \\
&~~~~+ 2\langle \eps_{R'}, \f_R - \hat{\f}_R' \rangle  + (\|\eps_i\|^2_2 - \sigma^2|R'|) \\
&\geq \err(R') - O(\sigma \sqrt{r + \log(n/ \delta)})\|\f_{R'}-\hat{\f}_{R'}\|_2 \\ &~~~~- O(\sigma \log(n/\delta))\sqrt{|R'|} \; ,
\end{align*}
where the second term is bounded by Lemma \ref{cor4} and the last term is bounded by Lemma \ref{lem2}.

We combine this bound with (\ref{eq:3terms}) and rearrange to get 
\begin{align*}
\err(R') &\leq O(\sigma^2  (r + \log(n/\delta))) \\
&~~~~+ O(\sigma \sqrt{r + \log(n/ \delta)})\|\f_{R'}-\hat{\f}_{R'}\|_2\\
&~~~~+ O \left(\sigma \log(n/\delta)(\sqrt{|R'|} - \frac{\sqrt{n}}{k'}) \right)	\; .
\end{align*}
This inequality is of the form $z^2 \leq bz+c$, where $b,c>0$, so then $z^2 \leq O(b^2+c)$. 
Thus, we have
\begin{eqnarray*}
\err(R') \leq& O(\sigma^2  (r+\log(n/\delta))) + \\ 
&O \left(\sigma \log(n/\delta)(\sqrt{|R'|} + \frac{\sqrt{n}}{k'}) \right) \;.
\end{eqnarray*}
Therefore, the total error for rectangles in $\Jcal_2$ is 
\begin{align*}
&\sum_{R \in \Jcal_2}\err(R) \leq \sum_{R' \in \Rcal'}\err(R') \\ 
&\leq O(\sigma^2k'  (r+\log(n/\delta))) + O(\sigma \log(n/\delta))(\sqrt{k'n}) ,
\end{align*}
where the second term follows from the fact that the rectangles in $\Rcal'$ are disjoint.
Summing up the bounds we get for $\mathcal{F}, \mathcal{J}_1,$ and $\mathcal{J}_2$ completes the proof.
\end{proof}

\section{Experiments} \label{sec:experiments}

We study the performance of our new estimator for segmented regression on both synthetic and real data. All experiments were done on a laptop computer with a 2.5 GHz Intel Core i5 CPU and 8 GB of RAM. The focus of these evaluations was on statistical accuracy, not time efficiency. However, we note that the runtime of our algorithm was similar to that of CART. All algorithms took at most 18 seconds to run on the above computer architecture. For our synthetic data evaluations with piecewise constant true functions, our algorithm performs better in this measure. On real datasets, while our piecewise constant fits are worse than CART, we can perform better than CART if we use the full power of our estimator and output a piecewise linear predictor. \new{Code of our implementation and experiments is available at \url{https://github.com/avoloshinov/multidimensional-segmented-regression}. }

\paragraph{Synthetic data}
\begin{figure}[!htb]
\center{\includegraphics[width=80mm]{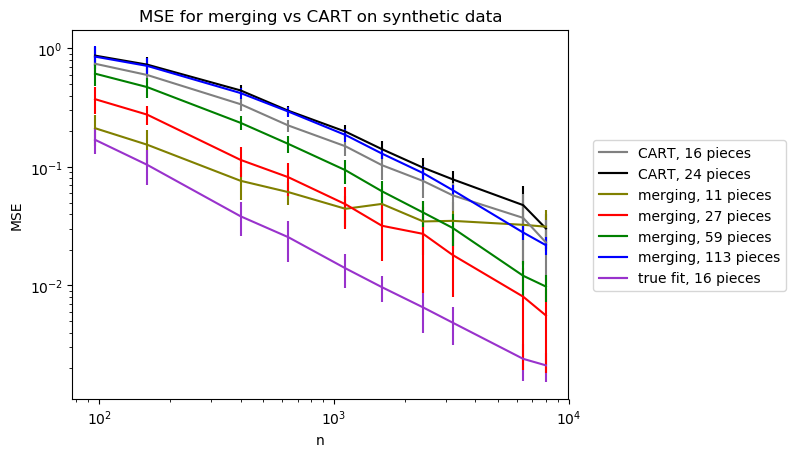}}
\caption{\label{fig:synthetic_mse} MSE of the merging algorithm and CART on synthetic data. There are four versions of our algorithm ``merging'' 
shown, with the average number of pieces that the algorithm produced over all of the trials over all of the values of $n$. There are two versions of CART shown --- one that was limited to producing $16$ pieces, and one that was limited to producing $24$ pieces. The ``true fit'' shows what the piecewise constant fit is on the true partition with $16$ pieces.}
\end{figure}

We first compare the statistical performance of our algorithm to CART on synthetic data. 
We used the ScikitLearn Julia library to import the DecisionTreeRegressor model from the Python skikit-learn library. 
This model implements CART (\url{https://scikit-learn.org/stable/modules/tree.html#tree}).

Since we are comparing to CART, which produces piecewise constant predictors, 
we consider the special case of our algorithm using constant predictors, to give the fairest comparison. 
Observe that this corresponds to the special case of the constant kernel $\kappa(\textbf{x}) = 1$.
We generate a function $f$ that is piecewise constant in $d'=2$ dimensions with a total of $d=10$ features. 
To generate the data, we draw $n$ (ranging from $n=96$ to $n=8000$) samples, 
where each coordinate is a normally-distributed random number with mean $0$ and standard deviation $1$. 
We then generate a piecewise constant function with $k=16$ pieces in $d'=2$ dimensions, 
by uniformly partitioning the data in the first two coordinates, 
such that each piece contains $n/k$ samples. Then, we pick a constant function for each piece, 
independently and uniformly at random from the interval $[0,1]$. 
We add i.i.d Gaussian noise with variance $1$ to each sample.

Figure \ref{fig:synthetic_mse} shows the average MSE over $20$ trials. 
The ``true fit'' shows the error of fitting a constant function on each of the true pieces. 
We ran CART with $16$ as the maximum number of leaves, as well as $24$ as the maximum number of leaves. 
We ran our algorithm ``merging'' with four different parameter settings, 
which resulted in an average of $11$, $27$, $59$, and $113$ pieces, 
for parameter settings respectively of $k, k/2, k/4,$ and $k/8$ 
for the number of candidate sets left when we stop merging. 
In theory, this parameter should be $2k'$, where $k'=k \log ^2 n$, but in practice, 
setting this parameter to smaller values and allowing our tree to keep merging works better, to a certain point. 
Most of our parameter settings achieved lower error than both of the CART algorithms for all values of $n$. 

\paragraph{Real data}
We investigate how our algorithm performs on real data through the 
Boston dataset (\url{https://www.cs.toronto.edu/~delve/data/boston/bostonDetail.html}). 
This dataset consists of $506$ samples, where each sample has $14$ attributes --- we use the first $13$ 
as features and the last as the label. The goal is to model the median value of owner-occupied homes in $1000$s 
of dollars. We chose this dataset because it is presented as the main example in the documentation for 
CART in scikit-learn (\url{https://scikit-learn.org/stable/modules/generated/sklearn.tree.DecisionTreeRegressor.html}), 
as well as in many other examples of CART. 

First, we compare the performance of CART on this dataset with a piecewise constant version of our algorithm 
(i.e., using the constant kernel). We run our algorithm and use the output of the number of pieces (25) 
as the input for how many pieces we want CART to output. For the merging algorithm, we use $4$ as the stopping parameter 
for merging, and $4$ as the value for sigma.  We compute the model based on all of the samples, 
and then look at the MSE of the model on all of the samples. 

For the stopping parameter, we tried values of $1, 2, 3, 4, 5, 6$, 
which resulted in piecewise fits ranging from $7$ pieces to $54$ pieces.
The choice for this parameter depends on the desired succinctness of the model. 
Since the comparisons to CART are similar for different value of this parameter, 
we just show the results for a single parameter. With a fixed stopping parameter (4), 
we tried $1, 2, 3, 4, 5, 10$ as values for sigma, representing the variance of the noise of the data. 
We used these parameters and ran our algorithm on the data, 
then used the value that gave the best MSE on the data. 
We note that as a result of our choice of \new{$\sigma$}, 
the MSE only differed by a maximum of $7$, and usually by only $1$-$2$.

\begin{figure}[!htb]
  \centering
  \begin{minipage}[b]{0.45\textwidth}
    \includegraphics[width=\textwidth]{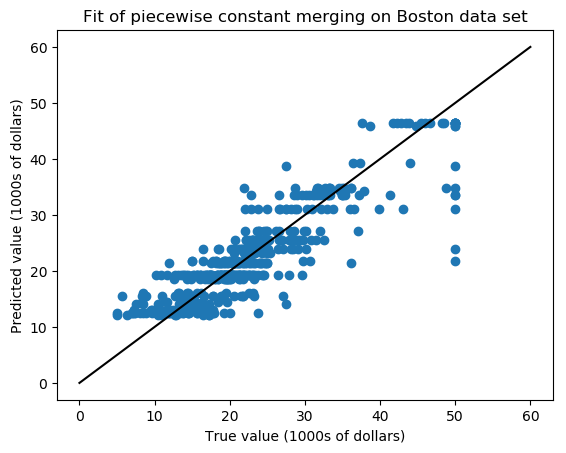}
  \end{minipage}
  \hfill
  \begin{minipage}[b]{0.45\textwidth}
    \includegraphics[width=\textwidth]{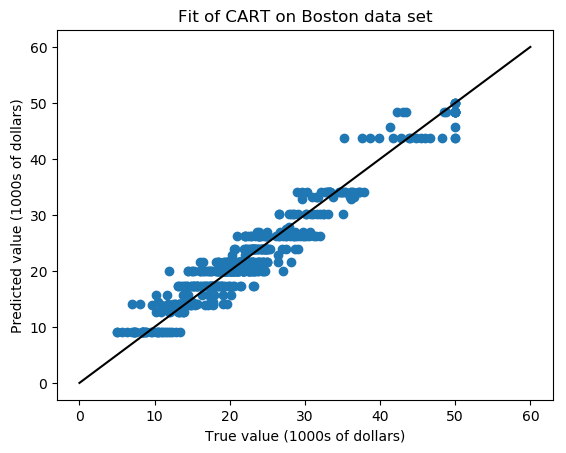}
  \end{minipage}
  \caption{The piecewise constant fit with $25$ pieces with the merging algorithm and CART. 
  The merging algorithm split on $2$ features: lstat (percent lower status of the population) and rm (average number of rooms per dwelling). 
  The MSE of the merging algorithm was $19.242$ and the MSE of CART was $6.155$.}
  \label{constant}
\end{figure}

Now, we look at the performance of CART on this dataset with the piecewise linear version of our algorithm 
(i.e., the identity kernel function $\kappa(\textbf{x}) = \textbf{x}$).  
Similarly to the constant experiment, we first run our algorithm, and use the output of the number of pieces 
as the input for how many pieces we want CART to output. For the merging algorithm, 
we use $3$ as the stopping parameter for merging, and $2$ as the value for sigma, 
which were chosen in the same manner as before. We compute the model based on all of the samples, 
and then look at the MSE of the model on all of the samples.

While our piecewise constant algorithm produced a result with worse MSE than CART, 
we can see that using our linear predictor can produce results with better MSE than CART in multiple regimes. 
We also note that our linear predictor, with sigma set between $1$ and $3$, outperforms CART 
for all stopping parameters that we chose, resulting in piecewise outputs on $7$ to $54$ pieces.

\begin{figure}[!htb]
  \centering
  \begin{minipage}[b]{0.45\textwidth}
    \includegraphics[width=\textwidth]{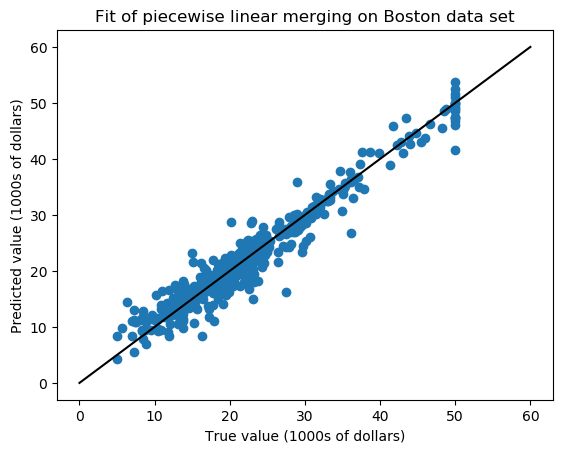}
  \end{minipage}
  \hfill
  \begin{minipage}[b]{0.45\textwidth}
    \includegraphics[width=\textwidth]{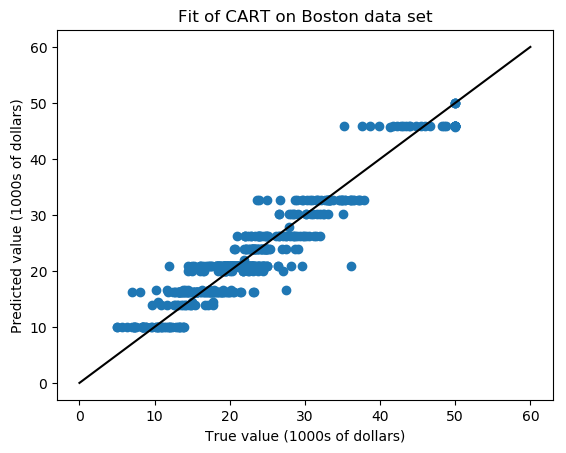}
  \end{minipage}
  \caption{The linear constant fit with $16$ pieces with the merging algorithm, and a $16$ piece constant fit with CART. 
  The merging algorithm split on $2$ features: lstat (percent lower status of the population) and rm (average number of rooms per dwelling). 
  The MSE of the merging algorithm was $5.464$ and the MSE of CART was $8.615$.}
  \label{linear_k2}
\end{figure}

\begin{figure}[!htb]
  \centering
  \begin{minipage}[b]{0.45\textwidth}
    \includegraphics[width=\textwidth]{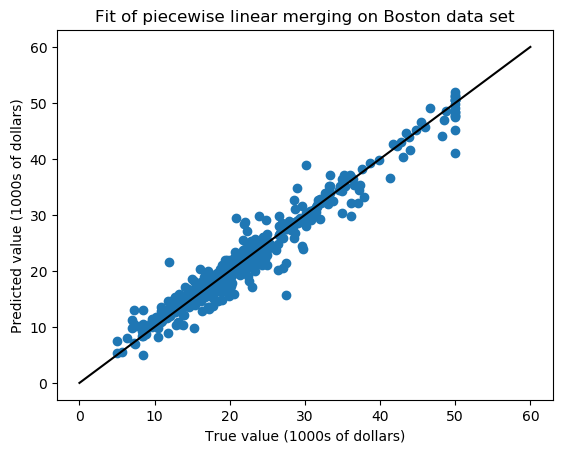}
  \end{minipage}
  \hfill
  \begin{minipage}[b]{0.45\textwidth}
    \includegraphics[width=\textwidth]{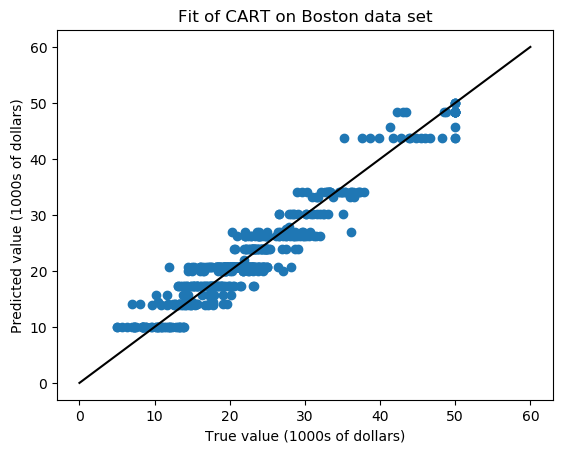}
  \end{minipage}
  \caption{The linear constant fit with $22$ pieces with the merging algorithm, and a $22$ piece constant fit with CART.
  The merging algorithm split on $3$ features: lstat (percent lower status of the population), rm (average number of rooms per dwelling), and dis (weighted distances to five Boston employment centers). The MSE of the merging algorithm was $4.303$ and the MSE of CART was $6.779$.}
  \label{linear_k3}
\end{figure}

\clearpage

\bibliography{allrefs}
\bibliographystyle{alpha}

\appendix

\section{Omitted Details from Section~\ref{sec:preliminaries}}

\subsection{Proof of Lemma~\ref{cor4}}
Before we prove the lemma, we need the following maximal inequality,
which bounds the correlation of a random vector with any fixed $d$-dimensional subspace, 
and the corollary bounds the correlation between sub-Gaussian random noise 
and any linear function on any rectangle. 

\begin{lemma}[see e.g., proof of Theorem 2.2 in~\cite{Rig15}]\label{lem4}
Fix $\delta>0$ and $\bv \in \R^n$. Let $\eps_1,\ldots,\eps_n$ be as defined in (\ref{model}). 
Let $\bepsilon = (\eps_1,\ldots,\eps_n)$, and let $S$ be a fixed, $r$-dimensional affine 
subspace of $\R^n$. Then, with probability $1-\delta$, we have 
\[\sup_{v \in S\setminus \{ 0\}} \frac{|\langle \bepsilon,\bv \rangle|}{\| \bv \|_2} \leq  O(\sigma \sqrt{r+\log(1/\delta)}) \;.\]
\end{lemma}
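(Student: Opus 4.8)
The plan is to follow the classical discretization argument for sub-Gaussian suprema over low-dimensional subspaces (as in the proof of Theorem~2.2 of~\cite{Rig15}). First I would note that the quantity to be bounded is scale-invariant, so that
\[
\sup_{\bv \in S \setminus \{0\}} \frac{|\langle \bepsilon, \bv\rangle|}{\|\bv\|_2} = \sup_{\bv \in S,\, \|\bv\|_2 = 1} |\langle \bepsilon, \bv\rangle| \;,
\]
i.e. it suffices to control the correlation of $\bepsilon$ with unit vectors of $S$. When $S$ is a linear subspace of dimension $r$ this supremum equals $\|\Pi_S \bepsilon\|_2$, where $\Pi_S$ is the orthogonal projection onto $S$, which is what makes only the dimension $r$ (and not $n$) enter the bound. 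To reduce the stated affine case to the linear one, I would replace $S$ by the linear span of $S \cup \{0\}$ (equivalently, translate $S$ to the origin and adjoin the translation direction as one extra coordinate); this raises the dimension from $r$ to at most $r+1$, which is absorbed into the $O(\cdot)$, so it is enough to prove the claim for a linear subspace of dimension $r$.

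Next I would discretize. Let $\Ncal$ be a $1/2$-net of the unit sphere $\{\bv \in S : \|\bv\|_2 = 1\}$; a standard volumetric bound gives $|\Ncal| \le 5^{r}$. For each fixed $\bu \in \Ncal$, the linear form $\langle \bepsilon, \bu\rangle = \sum_i \eps_i \bu_i$ is a sum of independent, mean-zero, sub-Gaussian variables, hence itself sub-Gaussian with variance proxy $\sigma^2 \|\bu\|_2^2 = \sigma^2$. The sub-Gaussian tail bound gives $\Pr[|\langle \bepsilon, \bu\rangle| > t] \le 2\exp(-t^2/(2\sigma^2))$, and a union bound over the net yields
\[
\Pr\Big[ \max_{\bu \in \Ncal} |\langle \bepsilon, \bu\rangle| > t \Big] \le 2 \cdot 5^{r}\, \exp\!\big(-t^2/(2\sigma^2)\big) \;.
\]
A routine net-to-sphere comparison then upgrades this maximum to the supremum over the full sphere: for a unit $\bv \in S$ pick $\bu \in \Ncal$ with $\bw := \bv - \bu$ satisfying $\|\bw\|_2 \le 1/2$, and use that $\bw \in S$ to get $M \le \max_{\bu \in \Ncal} |\langle \bepsilon, \bu\rangle| + \tfrac12 M$, where $M := \sup_{\|\bv\|_2 = 1} |\langle \bepsilon, \bv\rangle|$; hence $M \le 2 \max_{\bu \in \Ncal} |\langle \bepsilon, \bu\rangle|$.

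Finally I would choose the threshold. Taking $t = \Theta(\sigma\sqrt{r + \log(1/\delta)})$ makes $2 \cdot 5^{r}\exp(-t^2/(2\sigma^2)) \le \delta$, since then $t^2/(2\sigma^2) \ge r\log 5 + \log(2/\delta)$. Combined with the factor-$2$ loss from the net, this gives $\sup_{\bv} |\langle\bepsilon,\bv\rangle|/\|\bv\|_2 \le 2t = O(\sigma\sqrt{r + \log(1/\delta)})$ with probability $1-\delta$, as claimed. The only genuinely delicate point is the interaction in the exponent: the net cardinality contributes a term $\Theta(r)$ to $\log|\Ncal|$, which must combine with $\log(1/\delta)$ so that both sit under a single square root — this is precisely what forces the $\sqrt{r + \log(1/\delta)}$ form. (Alternatively, one could bound $\|\Pi_S\bepsilon\|_2^2 = \bepsilon^\top \Pi_S \bepsilon$ directly via a Hanson--Wright inequality, using $\E[\bepsilon^\top \Pi_S \bepsilon] = s^2 r \le \sigma^2 r$, but the net argument is self-contained and matches the cited reference.)
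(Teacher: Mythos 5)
Your proof is correct and follows essentially the same route as the source the paper defers to: the paper gives no proof of this lemma, citing instead the $\varepsilon$-net argument in the proof of Theorem~2.2 of~\cite{Rig15}, which is exactly your discretization of the unit sphere of $S$, union bound over the net with sub-Gaussian tails, and net-to-sphere comparison. The reduction from the affine to the linear case (absorbing one extra dimension into the $O(\cdot)$) and the choice $t = \Theta(\sigma\sqrt{r+\log(1/\delta)})$ are both handled correctly.
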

\noindent
With this lemma in hand we can now prove Lemma~\ref{cor4},

\begin{proof}[Proof of Lemma~\ref{cor4}]
Fix a partition of $[0,1]^2$ into $k'$ rectangles $\Rcal$, 
where each $R \in \Rcal$ is such that $R \in \Tcal$. 
Let $S_\Rcal$ be the set of $k'$-piecewise linear functions, 
which are linear fits on each $R \in \Rcal$. Then, $S_\Rcal$ is a $k' \cdot \rank(\kappa(\X))$-dimensional affine subspace. 
By Lemma \ref{lem4},
\[\sup_{f \in S_{\Rcal}}\frac{|\langle \bepsilon_R, \f_R \rangle|}{\|\f_R\|_2} \leq O(\sigma \sqrt{k' \rank(\kappa(\X)) +\log(1/\delta')}) \;,\]
with probability $1-\delta'$. The number of possible partitions $\Rcal$ is bounded above by ${n^2 \choose {k'}}=O(n^{2 k'})$. 
Let $\delta'=\delta/n^{2k'}$, then the result follows from a union bound over all possible partitions. 
\end{proof}

\subsection{Proof of Lemma~\ref{lem1}}

By the definition of the least squares fit, we have that 
$\norm{\by_\Rcal-\hat{\f}_\Rcal}^2_2 \leq \norm{\by_\Rcal-\f_\Rcal}^2_2 = \norm{\bepsilon_\Rcal}^2_2.$ 
If we expand the left hand side we get 
\begin{equation}
\norm{\by_\Rcal-\hat{\f}_\Rcal}^2_2  = \norm{\f_\Rcal+\bepsilon_\Rcal-\hat{\f}_\Rcal}^2_2  
=  \norm{\hat{\f}_\Rcal-\f_\Rcal}^2_2+ 2 \inner{\bepsilon_\Rcal, \f_\Rcal-\hat{\f}_\Rcal} + \norm{\bepsilon_\Rcal}^2_2 \;. \label{eq:expand}
\end{equation}
Applying Lemma \ref{cor4} gives us that with probability $1-\delta$, 
\begin{align*}
\norm{\hat{\f}_\Rcal-\f_\Rcal}^2_2 &\leq  2 \inner{\bepsilon_\Rcal, \hat{\f}_\Rcal- \f_\Rcal} 
\\ &\leq O(\sigma \sqrt{k'\cdot \rank(\kappa(\X)) +k'\log(n/\delta)}) \norm{\hat{\f}_\Rcal-\f_\Rcal}_2 \;.
\end{align*}
Rearranging this, we get that $\norm{\hat{\f}_\Rcal-\f_\Rcal}^2_2  \leq O(\sigma^2 k'\rank(\kappa(\X)+k'\log(n/\delta))$, 
which is what we wanted to show.

\end{document}